\newtheorem{theorem}{Theorem}
\newtheorem{definition}[theorem]{Definition}
\newtheorem{example}[theorem]{Example}
\newtheorem{proposition}[theorem]{Proposition}
\newtheorem{remark}[theorem]{Remark}
\newlength{\blank}
\newenvironment{proof}[1][{\hspace{-\blank}}]{{\noindent\textbf{Proof~{#1}.\ }}}{\hfill\qed}
\mathchardef\ordinarycolon\mathcode`\:
\def\vcentcolon{\mathrel{\mathop\ordinarycolon}}
\newcommand{\nc}{\newcommand}
\nc{\rnc}{\renewcommand}
\nc{\beq}{\begin{equation}}
\nc{\eeq}{{\end{equation}}}
\nc{\beqa}{\begin{eqnarray}}
\nc{\eeqa}{\end{eqnarray}}
\nc{\lbar}[1]{\overline{#1}}
\nc{\ketbra}[2]{|#1\rangle\!\langle#2|}
\nc{\proj}[1]{| #1\rangle\!\langle #1 |}
\nc{\avg}[1]{\langle#1\rangle}
\nc{\Rank}{\operatorname{Rank}}
\nc{\smfrac}[2]{\mbox{$\frac{#1}{#2}$}}
\nc{\tr}{\operatorname{Tr}}
\nc{\ox}{\otimes}
\nc{\dg}{\dagger}
\nc{\dn}{\downarrow}
\nc{\cA}{\mathcal{A}}
\nc{\cB}{\mathcal{B}}
\nc{\cC}{\mathcal{C}}
\nc{\cD}{\mathcal{D}}
\nc{\cE}{\mathcal{E}}
\nc{\cF}{\mathcal{F}}
\nc{\cG}{\mathcal{G}}
\nc{\cH}{\mathcal{H}}
\nc{\cI}{\mathcal{I}}
\nc{\cJ}{\mathcal{J}}
\nc{\cK}{\mathcal{K}}
\nc{\cL}{\mathcal{L}}
\nc{\cM}{\mathcal{M}}
\nc{\cN}{\mathcal{N}}
\nc{\cO}{\mathcal{O}}
\nc{\cP}{\mathcal{P}}
\nc{\cR}{\mathcal{R}}
\nc{\cS}{\mathcal{S}}
\nc{\cT}{\mathcal{T}}
\nc{\cX}{\mathcal{X}}
\nc{\cZ}{\mathcal{Z}}
\nc{\csupp}{{\operatorname{csupp}}}
\nc{\qsupp}{{\operatorname{qsupp}}}
\nc{\var}{\operatorname{var}}
\nc{\rar}{\rightarrow}
\nc{\lrar}{\longrightarrow}
\nc{\polylog}{\operatorname{polylog}}
\nc{\id}{{\operatorname{id}}}
\nc{\RR}{{{\mathbb R}}}
\nc{\CC}{{{\mathbb C}}}
\nc{\FF}{{{\mathbb F}}}
\nc{\NN}{{{\mathbb N}}}
\nc{\ZZ}{{{\mathbb Z}}}
\nc{\PP}{{{\mathbb P}}}
\nc{\QQ}{{{\mathbb Q}}}
\nc{\UU}{{{\mathbb U}}}
\nc{\EE}{{{\mathbb E}}}
\nc{\qed}{{$\hfill\Box$}}
\begin{document}

%\title{Ontological faithfulness of non-contextual models and physical measurements}
%\title{What is really tested by non-contextual inequalities?}
\title{What does an experimental test of quantum contextuality prove or disprove?}
 
\author{Andreas Winter}
\email{andreas.winter@uab.cat}
\affiliation{ICREA \&{} F\'{\i}sica Te\`{o}rica: Informaci\'{o} i Fen\`{o}mens Qu\`{a}ntics, %
Universitat Aut\`{o}noma de Barcelona, ES-08193 Bellaterra (Barcelona), Spain}
%\affiliation{Department of Mathematics, University of Bristol, Bristol BS8 1TW, United Kingdom}

\date{10 August 2014}

\begin{abstract}
The possibility to test experimentally the Bell-Kochen-Specker theorem
is investigated critically, following the demonstrations by Meyer,
Kent and Clifton-Kent that the predictions of quantum mechanics are
indistinguishable (up to arbitrary precision) from those of a non-contextual 
model, and the subsequent debate to which extent these models are actually
classical or non-contextual.

The present analysis starts from a careful consideration these
``finite-precision'' approximations. A stronger condition for 
non-contextual models, dubbed \emph{ontological faithfulness}, 
is exhibited.
It is shown that it allows to formulate approximately the constraints in 
Bell-Kochen-Specker theorems such as to render the usual proofs
robust. As a consequence, one can experimentally test to finite
precision ontologically faithful non-contextuality, and thus experimentally
refute explanations from this smaller class.
We include a discussion of the relation of ontological faithfulness
to other proposals to overcome the finite precision objection.
\end{abstract}

%\pacs{08/15.xx, A.K.-47}

\maketitle

%%%%%%%%%%%%%%%%%%%%%%%%%%%%%%%%%%%%%%%%%%%%%%%%%%%%%%%%%%%%%%%%%%%

%\begin{flushright}
%  \emph{Very interesting theory -- it makes no sense at all.}
%\end{flushright}
%
%\begin{flushright}
%  \emph{Who you gonna believe, me or your own eyes?}
%\end{flushright}
%
%\begin{flushright}
%  \emph{I got twenty-five dollars from Reader's Digest \\ last week 
%        for something I never said. I get \\ credit all the
%        time for things I never said. \\ You know that line
%        in You Bet Your Life? \\ The guy says he has 
%        seventeen kids and I \\ say:} I smoke a cigar, but
%        I take it out of \\ my mouth occasionally''? 
%        \emph{I never said that.}
%\end{flushright}

%\begin{flushright}
%  \emph{%However that may be, long may Louis de \\ Broglie continue
%        (...) to inspire those who suspect that \\ what is proved by 
%        impossibility \\ proofs is lack of imagination.}
%\end{flushright}
%
%
%\bigskip
%The ``impossibility proofs'' (also known as ``no-go theorems''),
%to which John Stewart Bell refers above, 
%in what may easily be his most cited quote~\cite{Bell82}, 
%concern classical explanations of the predictions of
%quantum mechanics. 
%My own relation to foundational questions, which can
%only be described as amateurish, hardly qualifies me to 
%comment more than the bare minimum on the monumental 
%contribution of Bell to quantum theory. But I confess that I always
%liked reading Bell, and so it cannot fail...

Attempts at proving that quantum mechanics 
is fundamentally non-classical go back to its very beginning. 
They start with the Copenhagen school's claims of the 
necessity of complementarity~\cite{Bohr:compl,Heisenberg:uncert}, 
to von Neumann's proof~\cite{vonNeumann:no-go}, via
Einstein-Podolsky-Rosen's attempt to show the incompleteness of 
quantum theory~\cite{EPR}, and on to an intellectual culmination in the 
work of Bell~\cite{Bell64,Bell66}, building on Gleason~\cite{Gleason}, 
as well as Kochen and Specker~\cite{KS67,Specker1960}. 
But the subject is full of vigour even today
as we witness a renewed evaluation of the foundations of quantum
mechanics~\cite{PR,ColbeckRenner,PBR,finite-speed,Almost-quantum}, now even 
reaching out to applications such as device independent quantum
cryptography~\cite{MasanesPironioAcin,HaenggiRenner}.

The work of Bell especially marked a turning point for the discourse
on the non-classical nature of quantum mechanics, in two ways: On
the one hand it showed the need for clear, operationally motivated 
criteria for the classicality of a theoretical explanation; on the 
other hand it demonstrated operational differences between the quantum
mechanical predictions and those of any theory based on classical 
hidden variable in the former sense. This opened the way for
experimental tests of quantum mechanics at an unprecedented level,
from quantum violations of Bell inequalities~\cite{CHSH,Aspect,Weihs}
to experimental verification of quantum 
contextuality~\cite{Cabello,Simon-et-al,Michler-et-al,Kirchmair-et-al,Ahrens-et-al}.

But just as these various no-go theorems have inspired the 
thinking of physicists, so have their refutations, or 
\emph{nullifications}, to use the term of Meyer~\cite{Meyer:null}.
Here we shall take the view of no-go theorems as actually true 
mathematical theorems, 
hence a refutation, rather than being the demonstration of
a mathematical error, consists in showing that a tacit, yet not
necessarily plausible assumption was made.

The present paper concerns the Bell-Kochen-Specker (BKS) theorems
on the impossibility of a non-contextual hidden variable
explanation of the predictions of quantum mechanics. 
We shall essentially stick to the original viewpoint of these 
authors, owing to Bell's 1966 review~\cite{Bell66} (which appeared
in print later, but actually predates the 1964 non-locality paper~\cite{Bell64},
% FOR SPECIAL ISSUE: %
whose semi-centenary is celebrated in the present special issue);
% %%%%%%%%%%%%%%%%%% %
non-contextuality is a property of hidden variable theories,
assigning ``true'' outcomes to all or some observables in a 
quantum mechanical system, relative to a given state. 
Risking to labour an obvious point, recall that for the
quantum theory it does not matter how these observations are made,
as long as they lead to a well-defined POVM
-- and there will always be very different-looking procedures,
involving separate systems, different quantum information carriers,
or temporal orderings in which the outcome is generated step-by-step.
Regarding the hypothetical hidden variable description, however, 
assumptions have to be made regarding the relation of the variables
attached to different or even incompatible observables, or relating them
to different measurement procedures. Locality of the hidden variables
is one of these conditions~\cite{Bell64,CHSH}, assuming a multi-partite
quantum system with space-like separated observers. Non-contextuality
can be viewed as a more ``bare bones'' and abstract condition,
in particular not requiring a multi-partite system.
It is important to observe that the assumption of locality imposes
non-contextuality on certain sets of hidden variables, a fact
noticed repeatedly (for a recent and exhaustive discussion see
e.g.~\cite{Specker-s-parable}). Conversely, Bell-Kochen-Specker
proofs of quantum contextuality have inspired Bell inequalities
for non-locality~\cite{Cabello:Belllll}.

Since its inception, the notion of non-contextuality has been
deeply examined in a variety of forms. In particular the work
of Spekkens~\cite{Spekkens} is noteworthy, in that it identified
several distinct aspects of contextuality, and because it argued the
necessity to have operational, theory-independent definitions of
the basic terms. The recent paper~\cite{ChiribellaYuan}
has taken up this challenge for the notion of ``sharp measurement''.
(Non-)contextuality is also a recent hot topic in the ongoing quest
for physical axioms limiting the range of conceivable probabilistic
theories, in particular 
\emph{Local Orthogonality}~\cite{Fritz-et-al}
and \emph{Consistent Exclusivity}~\cite{Cabello-E,Henson,AFLS,FLS,Yan,Amaral-et-al},
were anticipated by Specker~\cite{Specker1960}, cf.~\cite{Cabello:Speckerrrrr}.

\medskip
The present paper is structured as follows: Section~\ref{sec:BKS} gives an 
account of usual (infinite precision) BKS theorems in terms of
non-contextual inequalities, followed by the objections
raised by Meyer, Kent and Clifton-Kent (MKC) in section~\ref{sec:MKC}
due to finite precision of any realistic experiment.
After that we present some reflections on the necessity of being
able to identify outcomes in different experiments 
(section~\ref{sec:outcome-identity}) as being in a certain sense 
``the same''. 
The central part is section~\ref{sec:precision}, where the notion of
\emph{ontologically faithful non-contextuality (ONC)} is
introduced, which is designed to reflect the finite precision 
of actual experiments in the supposed hidden variable theory; we show
then that this notion gives substance to experimental tests
of quantum contextuality, and concretely making non-contextual
inequalities robust to finite precision.
In section~\ref{sec:conclusions} we discuss our results and make
a comparison with other approaches, in particular one based on
sequential execution of measurements.

\section{Bell-Kochen-Specker Theorems \protect\\ and Non-Contextual Inequalities}
\label{sec:BKS}
Traditionally, Bell-Kochen-Specker (BKS) proofs start from a collection
$(P_i:i \in V)$ of projectors on a finite-dimensional Hilbert
space. Each subset $M \subset V$ of indices with the property
that $\sum_{i\in M} P_i = \1$, describes a measurement, more precisely
a von Neumann measurement, with possible outcomes labelled by
$M$. Given a quantum state $\rho$, the Born rule then prescribes
the probabilities of outcomes for each measurement $M$ that
can be formed by collecting projectors $P_i$:
\begin{equation}
  \label{eq:Born_rulz}
  \Pr\{i|\rho\} = \tr\rho P_i.
\end{equation}
The remarkable thing is that this probability depends only on the
outcome $i$, more precisely on the projector $P_i$ (given the state,
which we consider fixed in our discussion).

\emph{A classical non-contextual hidden variable model} for this scenario
is meant to reveal pre-existing values of the measurements, but in
such a way that the indicator $X_i \in \{0,1\}$ depends on the label $i$ only,
and not whether $P_i$ appears in a measurement $M$ or in another 
measurement $M'$ -- the different contexts. 
Here, $X_i = 1$ indicates that $i \in M$ is the outcome 
if $M$ was measured. The model reproduces the predictions of quantum
theory if $\EE X_i = \Pr\{i|\rho\} = \tr\rho P_i$ in accordance 
with the Born rule (\ref{eq:Born_rulz}).

Clearly, a necessary requirement for the possibility of such a model
is that for all possible measurements $M\subset V$, 
i.e.~$\sum_{i\in M} P_i = \1$,
\begin{equation}
  \label{eq:BKS}
  \sum_{i\in M} X_i = 1.
\end{equation}
In words: the assignment $X_i$ picks one and only one ``real'' 
outcome for each measurement $M$. To be painfully precise, the
relation~(\ref{eq:BKS}) should hold with probability $1$ (allowing
for inequality on an event of probability $0$), but we can ignore 
this detail for the present discussion.
Recall that a random variable is a (measurable) function 
$X_i:\Omega \longrightarrow \RR$ from a probability space
$\Omega$ with a probability distribution $\mu$ (where there is
implicit the $\sigma$-algebra of events; cf.~\cite{Feller} or
any other modern textbook on probability for the basic terms).
Since $X_i$ takes only values $0$ and $1$, it is equivalently
described by the sets (\emph{events} in probability jargon)
\[
  E_i = X_i^{-1}(1) = \{ \omega\in\Omega : X_i(\omega) = 1 \},
\]
so that $X_i = 1_{E_i}$ is the indicator function of $E_i$.
The condition~(\ref{eq:BKS}) is then equivalent to 
\begin{equation}
  \label{eq:BKS-sets}
  \dot{\bigcup_{i\in M}} E_i = \Omega,
\end{equation}
where the notation on the left hand side refers to the disjoint union.
In other words, $(E_i:i\in M)$ is a set partition of $\Omega$.

Kochen and Specker~\cite{KS67} and later many other authors, for a 
selection see~\cite{PeresMermin,Cabello-18} and references therein,
have found sets of projectors $P_i$ in three- and higher-dimensional
Hilbert spaces for which these conditions are contradictory: there
exists no non-contextual hidden-variable model satisfying
either of eqs.~(\ref{eq:BKS}) or (\ref{eq:BKS-sets}). 

In those works, the concept of \emph{colouring} is central, which is the
evaluation of the $X_i$ on a single point $\omega\in\Omega$:
$X_i(\omega) \in \{0,1\}$ indicates whether $i$ is singled
out or not. The colouring rule is that in every measurement $M$
one must choose one and only one element. Note that a $0$-$1$-colouring
is a special case of our above notion of classical non-contextual
hidden variable model, namely when the $X_i$ take values $0$ or
$1$ with unit probability. Conversely, a probability distribution
over colourings gives rise to a classical non-contextual hidden variable model.

With respect to these colourings, we strongly suggest however 
to take a probabilistic point of view as explained above, since
the aim of a hidden variable theory is not merely a logically 
consistent assignment of values (surely a necessary condition),
but the explanation of observed statistical data. 
Indeed, a broader approach, leaving a role also for the
quantum state, is to consider within the set $V$ subsets $C$
(which we shall call \emph{contexts}) such that $\sum_{i\in C} P_i \leq \1$;
i.e., rather than demanding that $C$ describes a measurement, we
only require that it can be completed to one. The set of all contexts,
denoted $\Gamma$, is a collection of subsets of $V$ (aka \emph{hypergraph}),
and now a non-contextual hidden variable model only has to satisfy
\begin{equation}
  \label{eq:BKS-inequality}
  \forall \text{ contexts } C\in\Gamma \quad \sum_{i\in C} X_i \leq 1.
\end{equation}
With this small modification 
every set of projectors, with associated collection 
$\Gamma$ of contexts, has a non-contextual hidden variable model,
but there are differences in the attainable expectation values
$(t_i=\Pr\{i\}:i\in V)$, be it as quantum expectations
$(\tr\rho P_i:i\in V)$ or as classical expectations $(\EE X_i:i\in V)$. 

For instance, in a traditional BKS proof, where $\sum_{i\in C} P_i = \1$
for each context $C\in\Gamma$, we know that 
$\sum_{i\in C} \EE X_i \leq 1$, but we cannot reach
equality in all of the $C\in\Gamma$, hence
\begin{equation}
%  \label{eq:BKS-inequality}
  \sum_{C\in\Gamma} \sum_{i\in C} \EE X_i \leq |\Gamma|-1.
\end{equation}
By contrast, quantum mechanics attains 
$|\Gamma| = \sum_{C\in\Gamma} \sum_{i\in C} \langle P_i \rangle$, 
in fact for \emph{every} state!
Thus, BKS proofs can be interpreted as statements concerning
the (im-)possibility of realising certain constraints among 
projector effects in the quantum case, and events in the 
classical case.

More generally, we may consider \emph{non-contextual inequalities}
of the form
\begin{equation}
  \label{eq:nc-inequality}
  \sum_{i\in V} \lambda_i \EE X_i \leq \beta_{\text{cl}},
\end{equation}
where $\lambda_i \geq 0$ are certain coefficients and $\beta_{\text{cl}}$
is the maximum of the l.h.s.~over all non-contextual hidden variable
models, i.e.~$X_i\in\{0,1\}$ satisfying eq.~(\ref{eq:BKS-inequality}).
Substituting quantum expectations, 
$\sum_i \lambda_i \tr \rho P_i$ may exceed the classical limit $\beta_{\text{cl}}$
up to a quantum maximum of $\beta_{\text{qu}}$. A logically consistent
way to think about these structures is to \emph{start} with the
set $V$ of outcomes and the contexts, i.e.~with the hypergraph
$\Gamma$ -- this sets the scene of the possible experiments we want
to describe and their outcomes (crucially identifying outcomes
in different experiments as being the same; we'll return to this
later). 
In this way we can divorce the logic of speaking about the 
experiments from the theory that we believe or hypothesize to
underly.
For instance, a classical, non-contextual, model for this abstract
structure is given by $0$-$1$-variables $X_i$ satisfying
(\ref{eq:BKS-inequality}), while a quantum model is a collection
of projectors $P_i$ with
\begin{equation}
  \label{eq:quantum-model}
  \forall C\in\Gamma \quad \sum_{i\in C} P_i \leq \1.
\end{equation}
Note that classical models for $\Gamma$ are a special case
of quantum models, where all $P_i$ commute.
See~\cite{CabelloSeveriniWinter10}, where this
formalism was developed further (including also more general
probabilistic theories with convex sets of states and linear
functions on states as effects). In~\cite{AFLS} this framework
was developed even more, however returning to the equality 
conditions (\ref{eq:BKS}) and $\sum_{i\in C} P_i = \1$.
To justify the emphasis on linear inequalities, observe that
\begin{align*}
  \mathcal{E}(\Gamma) &= \bigl\{ (\EE X_i) : (X_i) \text{ non-contextual HV model} \bigr\}, \\
  \Theta(\Gamma)      &= \bigl\{ (\tr\rho P_i) : (P_i) \text{ quantum model, } 
                                                                       \rho \text{ state} \bigr\},
\end{align*}
are convex, the first being in fact a polytope, cf.~\cite{Knuth,CabelloSeveriniWinter10,AFLS}.
%Hence it is natural to use linear inequalities to delimit them.

\begin{remark}
  As we are dealing with projectors,
  the condition $\sum_{i\in C} P_i \leq \1$ is evidently equivalent
  to $P_i P_j = 0$ (i.e.~orthogonality of their supports) for
  all $i\neq j$ occurring jointly in \emph{some} context $C \ni i,j$. 
  
  Likewise, for a classical, non-contextual model, the condition
  $\sum_{i\in C} X_i \leq 1$ is equivalent to $X_i X_j = 0$ for
  all $i\neq j$ occurring jointly in \emph{some} context $C \ni i,j$.
  
  This relation defines a graph on $V$, with an edge
  $i\sim j$ if and only if there is a $C\in\Gamma$ with $i,j\in C$.
  It is known as 
  \emph{exclusivity graph~\cite{AFLS,CabelloSeveriniWinter14}}.
\end{remark}

\medskip
Rather than dwelling more on the abstract formalism, let us
look at an example~\cite{Klyachko}, which is indeed the one that
inspired the general hypergraph approach~\cite{CabelloSeveriniWinter10}:
\begin{example} 
  \label{ex:klyachko}
  \normalfont
  For $5$ outcomes $V=\{0,1,2,3,4\}$ and contexts 
  $\Gamma = \{ 01,12,23,34,40 \}$, there is a well-known quantum
  realization by $5$ rank-one projectors $P_i$ in three-dimensional
  Hilbert space: This means that $P_i P_{i+1} = 0$ for all $i$,
  where $i+1$ is understood $\mod 5$. Notice that from these projectors
  many POVMs can be built, the simplest ones being the binary 
  measurements $(P_i,\1-P_i)$, but as this and $(P_{i+1},\1-P_{i=1})$
  are compatible, we also have $(P_i,P_{i+1},\1-P_i-P_{i+1})$.
  
  However, there can be many
  other quantum models, including higher-rank projectors in
  higher dimension. For a suitably chosen state $\rho$
  (and interestingly not any state), we can achieve
  $\sum_i \tr\rho P_i = \sqrt{5}$~\cite{Klyachko,Lovasz79}, which is indeed 
  the quantum maximum,
  $\beta_{\text{qu}} = \sqrt{5}$~\cite{CabelloSeveriniWinter10}.
  
  On the other hand, it is straightforward to check that 
  the maximum of $\sum_{i=0}^4 \EE X_i$ is $\beta_{\text{cl}}=2$.
\end{example}

\medskip
The above translation, from an unsatisfiable set of logical
constraints to a limitation on expectation values, all of which
are in principle observable quantities, is significant.
It elevates (and generalizes) BKS proofs to experimentally
testable propositions, or so it would seem: 
Indeed, non-contextual hidden variable models impose a bound 
$\beta_{\text{cl}}$ on the expectation value of $\sum_i \lambda_i X_i$, 
eq.~(\ref{eq:nc-inequality}), which is violated by the
quantum expectation value $\sum_i \lambda_i \langle P_i \rangle$.
There is only one catch, or rather the very reason why quantum
mechanics can outperform non-contextual classical models: The
quantum expectation values are not accessible in a single
von Neumann measurement. What is more, it is necessary for a
gap $\beta_{\text{cl}} < \beta_{\text{qu}}$ to occur, that
the same projector $P_i$ occurs in different, incompatible
measurements, for various $i$.
In the next section we shall see that this poses not only a
conceptual problem, but also a practical one when purportedly testing
quantum contextuality (i.e.~experimentally refuting classical
non-contextual hidden variable explanations).

\section{Meyer-Clifton-Kent's Nullification of BKS}
\label{sec:MKC}
Let us start with an easy objection against any physical relevance 
of BKS theorems stemming from the fact that the hidden variable theory 
for a set of effects is supposed to assign pre-existing values only
to projective measurements. But in experiments it is highly unlikely
that ever a sharp von Neumann measurement is implemented. What is
more, experimental evidence based on observable expectation values
never allows the experimenter to distinguish conclusively between
a projector (an element in an ideal measurement) and some arbitrarily
close POVM element (aka effect), i.e.~a positive semidefinite operator 
upper bounded by $\1$, be it another projector or a genuinely non-projective
POVM element. 
However, in this form this does not pose a worry, since the quantum
mechanical expectations values of two operators $A$ and $B$ cannot differ 
by more than $\|A-B\|$, the operator norm of their difference. This 
norm difference can be experimentally estimated via the fundamental
relation
\[
  \| A-B \| = \max_{\rho\text{ state}} | \tr\rho A - \tr\rho B |.
\]
Thus, we can at least in principle confirm experimentally
(within the rules of quantum mechanics and according to our
command of the underlying physical system) 
that the experiments implement 
POVM elements close to the required projectors. 
%
% $\| A-B \|_\rho := \tr\rho|A-B|$ might also be a good norm,
% when the state is considered fixed... Operational meaning?
%
This is important because
allowing general POVM elements, one can reach values of
$\sum_i \lambda_i \langle P_i \rangle$ even larger than
$\beta_{\text{qu}}$, all the way to
\[
  \beta_{\text{g}} = \max_i \lambda_i t_i
                     \quad \text{s.t.}\ \forall i\ 0\leq t_i\leq 1,\ 
                                        \forall C\in\Gamma\ \sum_{i\in C} t_i \leq 1,
\]
the maximum value allowable by generalised probabilistic
theories~\cite{Wright,CabelloSeveriniWinter10}. For instance
for the pentagon (Example~\ref{ex:klyachko}), this value is
$\beta_{\text{g}} = \frac52$.

At the same time, we would naturally demand that whatever the 
experiment does, it should have a classical hidden variable 
explanation for each measurement outcome. %(UNSHARP PAPER BY SPEKKENS?)

This brings us to the objection by Meyer~\cite{Meyer:null}, which was
greatly refined and extended by Kent~\cite{Kent:null}, 
as well as Clifton and Kent~\cite{CliftonKent:null}. These authors
show that in each dimension $d$ of the underlying Hilbert space
of a quantum system, there exists a dense set 
$\mathfrak{M} = \{ M^{(1)},M^{(2)}, \ldots \}$ (w.l.o.g.~countable)
of complete von Neumann measurements $M^{(j)} = (P_{j1},\ldots,P_{jd})$,
consisting of rank-one projectors, with the property that every
$P_{jk}$ occurs in only one measurement, namely $M^{(j)}$.
Here, ``dense'' refers to the set of all von Neumann measurements:
for every von Neumann measurement $(Q_1,\ldots,Q_d)$ and every
$\epsilon > 0$, there exists an $M^{(j)} \in \mathfrak{M}$ such that
$\| Q_k - P_{jk} \| \leq \epsilon$ for all $k$.

This set $\mathfrak{M}$ of measurements clearly has a non-contextual
hidden variable model reproducing the correct statistics for any given
state $\rho$: Any random variables $X_{jk} \in \{0,1\}$ such that
\[
  \Pr\{ X_{jk}=1,\ X_{j\widehat{k}} = 0\ \forall \widehat{k}\neq k \}
                                                      = \tr \rho P_{jk}
\]
will do.
[In fact, this can even be extended to POVMs with a bounded
number of outcomes.]
Such sets are not so hard to come by, either by existence proofs
or constructively. 

Now, if the experimenter needs to implement a measurement
$Q = (Q_1,\ldots,Q_d)$, she can only ensure (and demonstrate by
experimental verification) that she has done so up to a finite
accuracy $\epsilon$. In particular, she cannot distinguish her
experimental observations from those of a suitably close
measurement $M^{(j)} \in \mathfrak{M}$ -- which however has
a genuinely non-contextual hidden variable explanation! In 
practice, hence, where one can never be sure which one of the 
infinitely many measurements arbitrarily close to $Q$ was responsible
for the observations, the experimenter cannot rule out a fully
non-contextual hidden variable theory, nor the concomitant
requirement that ``really'' only the measurements in the set
$\mathfrak{M}$ are implemented. 
Note that this has nothing to do with the correctness of the 
mathematical reasoning of Gleason, Bell, Kochen-Specker, and 
so on (of which simply the prerequisites do not apply), but 
only concerns its relevance to the observable world.

Not surprisingly perhaps, this argument, though simple and
in our opinion unrefutable, has sparked a considerable debate
that continues in some form until today, but a 
review of which is beyond the scope of this paper. See however
the excellent discussion and extensive references in the article
of Barrett and Kent~\cite{BarrettKent:null}. 

In a nutshell, it is clear to physicists that \emph{something} quantum 
is demonstrated in the experiments: An inequality is violated
and incompatible measurements are performed in successive runs.
Yet, the Meyer-Clifton-Kent (MKC) argument 
shows clearly that something is lacking to
be able to claim experimental confirmation of quantum contextuality.
So, what can we salvage from this intolerable situation?

\section{What Does ``Same Outcome'' Mean?}
\label{sec:outcome-identity}
All discussions of hidden variables in quantum mechanics have 
to labour one point, in one way or another: the identification
of the quantum theoretic entities, as given by the formalism,
with counterparts in the hypothetical hidden variable model.
This is where inevitably unproven assumptions are made -- have 
to be made, indeed, as we know next to nothing about this 
alternative description, which after all may not even be real,
and rather only want to talk about
the general form of that hypothetical theory, which in addition
usually is only meant to ``explain'' some small section of the
range of observable phenomena. 

In non-contextual hidden variable models, this has to be
done on the level of the measurement outcomes by postulating
$0$-$1$-indicator random variables, singling out at most one 
of the possible outcomes of each measurement (one and only one 
in the original BKS theorems, but as explained in 
section~\ref{sec:BKS} we can relax this to ``at most one''). 
So far, this is nothing peculiar, and already the end of the story
in the MKC models (see the previous section). The curious thing,
however, is that in addition we demand that certain outcomes ``$i$''
in one measurement and outcome ``$j$'' in another measurement
(which may require completely different experiments one from
the other) are to be identified. For the hidden variable theory
this is taken to mean that the two associated random variables
take the same value; for the quantum model it means that 
these outcomes are represented by the same projector, more general
same effect, in the different measurements. 

Operationally we can recognize identical effects by their 
identical response to all different state preparations. 
Turning this around, this is how a theory tells us what its 
effects are: concise descriptions of the different
responses (as probabilities of a ``click'') of experiments. 
Cf.~the theory of Ludwig~\cite{Ludwig}, in which states are
equivalence classes of state preparations with respect to the
statistics under all possible measurements; and vice versa
measurements (and indeed effect) equivalence classes of experiments
with respect to the statistics under all possible state
preparations. In particular, coarse graining the outcomes of
any procedure to only two yields a binary measurement, for
example assigning ``yes'' if a particular click happens, and
``no'' otherwise. If starting from two experiments and 
singling out an outcome in each of them, we end up with equivalent
binary observables, this allows us to identify the effects
corresponding to these outcomes. (Recall that a priori these
outcomes are simply defined as clicks of a certain kind in two
potentially completely different experimental procedures.)

Following Spekkens' operational approach~\cite{Spekkens}, 
a non-contextual classical
hidden variable model has to assign random variables to each
effect, in particular the same value to all appearances of that
effect in different contexts. There is probably not any compelling 
reason to believe a priori that the hidden variables should 
have this property; but it is undoubtedly a very intuitive idea,
rooted in our classical intuition, cf.~Bell's discussion~\cite{Bell66}.

What the MKC constructions exploit is that our access to the
quantum mechanical effects is governed by a topology of
closeness, rather than identity; on the other hand, the previous
requirement of non-contextuality does not put any conditions
on the classical entities assigned to two distinct, but
infinitesimally close effects.
Towards the end of Spekkens' article~\cite{Spekkens}, it is suggested 
that to overcome the MKC critique based on finite precision, we need an 
extended version of the above identification principle, requiring that
the elements representing similar effects in the quantum
theory are in some sense similar. In the next section we present
a simple proposal to formulate such a similarity principle.

\section{Finite Precision: \protect\\ Ontological Faithfulness}
\label{sec:precision}
The effects of quantum theory have a geometry thanks to the
operator norm $\| A-B \|$: As explained in section~\ref{sec:MKC}, 
this is the largest difference between expectation
values of the two operators on the same state. Experiments
consequently can pin down an effect only up to statistical
error bars in this norm. 
In particular, let us consider a set $\cE$ of effects sufficiently 
close to projectors. We call this an \emph{$\epsilon$-precise quantum model} 
of a set $V$ of outcomes $i$ and a collection $\Gamma$ of
contexts $C\subset V$, if every $Q_i\in \cE$ is $\epsilon$-close
to a projector $P_i$, and furthermore for every $C\in\Gamma$,
there is a collection of $Q^C_i\in\cE$ such that
$\sum_{i\in C} Q_i^C \leq \1$ and $\|Q_i^C-P_i\|\leq\epsilon$
for all $i\in C$. This encapsulates the notion of contexts $C$,
each outcome $i$ of which can be approximately identified with 
a projector $P_i$.

The corresponding classical hidden variable models are captured
by the following definition.

\begin{definition}
  \label{def:epsilon-ONC}
  An \emph{$\epsilon$-ontologically faithful non-contextual 
  ($\epsilon$-ONC) model} for a hypergraph $\Gamma$ of
  contexts $C \subset V$ consists of a family of random variables 
  $X^{C}_{i} \in \{0,1\}$, $i\in C \in \Gamma$, such that
  \begin{align*}
    &\forall C\in\Gamma \quad \sum_{i\in C} X^{C}_{i} \leq 1, \text{ and}\\
    &\forall C,C'\in\Gamma\ \forall i\in C\cap C' \quad
                       \Pr\{ X^{C}_{i} \neq X^{C'}_{i} \} \leq \epsilon.
  \end{align*}

  In other words: For each context $C\in\Gamma$, the family
  $(X^{C}_{i}:i\in C)$ is a classical hidden variable model of a
  measurement containing the outcomes $C$; and the models
  for the ``same'' outcome $i$ occurring in different contexts 
  $C$ and $C'$ almost coincide.
\end{definition}

\medskip
The spirit of this definition is that it imposes a distance on 
the random variables $X^C_i$ representing the different
``incarnations'' of the outcome $i$. As we are dealing with 
$0$-$1$-variables, the probability of disagreeing is a natural
such measure, but in more complex situations other distances
may be employed.

Logically, both definitions, of an approximate quantum model and
of an ontologically faithful non-contextual classical model, are
independent, hinging directly on the combinatorial structure
of the permissible contexts $\Gamma$. However, it may be helpful to
think of the underlying idea as a robustification of the 
BKS assignment of a $0$-$1$-random variable to each projector in 
a set of quantum measurements -- see Fig.~\ref{fig:NC-ONC}.

\begin{figure}[ht]
  \includegraphics[width=8.4cm]{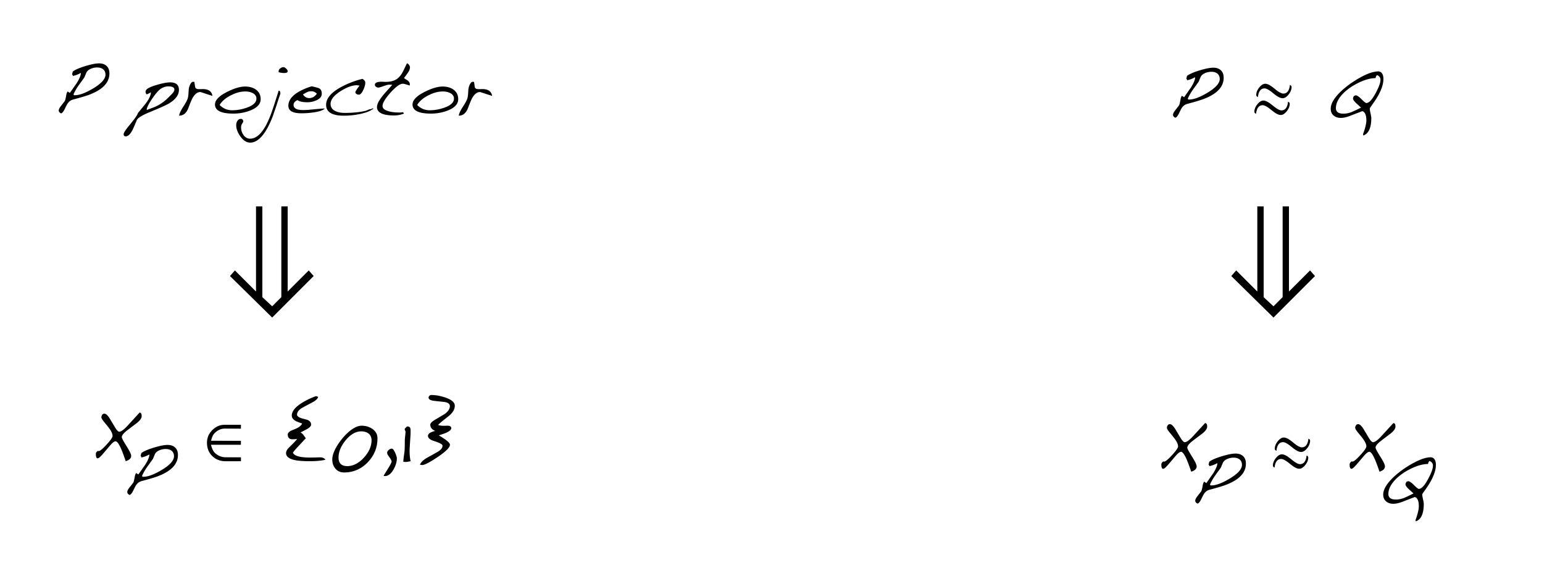}
  \caption{Left: A non-contextual hidden variable model assigns a random variable
           $X_P\in\{0,1\}$ to projectors $P$. 
           Right: In an ontologically faithful non-contextual model, 
           this assignment has to respect the geometry of the space 
           of projectors as well as that of the random variables. On
           the quantum side, it is given by the operator norm; on the hidden
           variable side, by the probability of being distinct.}
  \label{fig:NC-ONC}
\end{figure}

\medskip\noindent
%\textcolor{red}{Giulio: emphasize more!}
{\bf From approximate to exact non-contextual hidden variables.}
Now we show how the $\epsilon$-approximations introduced above
can be eliminated at the expense of having, with some small probability, 
no outcome among the set $V$.
Indeed, we can easily build a non-contextual hidden variable model from
any $\epsilon$-ONC model by letting
\begin{equation}
  \label{eq:epsilon-to-zero}
  Y_i := \prod_{i\in C\in\Gamma} X^{C}_i.
\end{equation}
Note that $Y^{C}_i := Y_i$ for $i\in C$ defines a $0$-ONC
model and that it approximates the original one well:
\begin{proposition}
  \label{prop:epsilon-to-zero}
  Consider an $\epsilon$-ONC model $(X^C_i)_{i\in C\in\Gamma}$ 
  and associated $Y_i$ as per eq.~(\ref{eq:epsilon-to-zero}).
  Let $k_i$ be the number of times an outcome $i$ occurs in some
  context $C\in\Gamma$. Then the collection $(Y_i)$ is a non-contextual 
  hidden variable model, and for every $i\in V$,
  \[
    \Pr\bigl\{ \exists C\ni i\ \text{s.t.}\ X^C_i \neq Y_i \} \leq (k_i-1)\epsilon.
  \]
\end{proposition}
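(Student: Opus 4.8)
The plan is to treat the two assertions of the proposition separately, since the first is deterministic (pointwise on $\Omega$) while the second is probabilistic. For the claim that $(Y_i)$ is a genuine non-contextual hidden variable model, I would first note that each $Y_i$, being a finite product of $\{0,1\}$-valued random variables, again takes values in $\{0,1\}$. The crucial pointwise observation is that for any context $C\in\Gamma$ and any $i\in C$ we have $Y_i \leq X^C_i$: indeed $X^C_i$ is one of the factors in the product defining $Y_i$, and all remaining factors lie in $\{0,1\}$, hence are at most $1$. Summing over $i\in C$ and invoking the first defining property of an $\epsilon$-ONC model gives $\sum_{i\in C} Y_i \leq \sum_{i\in C} X^C_i \leq 1$, which is exactly the non-contextuality condition~(\ref{eq:BKS-inequality}).

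For the probabilistic estimate I would fix an outcome $i\in V$ and enumerate the contexts containing it as $C_1,\ldots,C_{k_i}$, so that by definition $Y_i = \prod_{j=1}^{k_i} X^{C_j}_i$. The first step is to identify the event in question. I claim the event $\{\exists C\ni i : X^C_i \neq Y_i\}$ coincides exactly with the event that the values $X^{C_1}_i,\ldots,X^{C_{k_i}}_i$ are \emph{not all equal}. If they share a common value $v$, then $Y_i=v$ and every incarnation matches; conversely, if they disagree, then at least one factor is $0$, forcing $Y_i=0$, while any incarnation equal to $1$ then differs from $Y_i$. So the target probability is precisely the probability that these $k_i$ variables fail to agree.

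The key step is to bound this ``not all equal'' probability by $(k_i-1)\epsilon$ rather than by the cruder $\binom{k_i}{2}\epsilon$. Here I would use a chain argument: if $X^{C_1}_i,\ldots,X^{C_{k_i}}_i$ are not all equal, then by transitivity of equality there must exist some \emph{consecutive} index $j\in\{1,\ldots,k_i-1\}$ with $X^{C_j}_i \neq X^{C_{j+1}}_i$. Applying the union bound over these $k_i-1$ consecutive pairs, and using the second $\epsilon$-ONC condition $\Pr\{X^{C_j}_i \neq X^{C_{j+1}}_i\}\leq\epsilon$ for each, yields the asserted bound $(k_i-1)\epsilon$.

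I do not expect a serious obstacle; the only point requiring care is the reduction from all pairs to a spanning chain. A naive union bound over all $\binom{k_i}{2}$ pairs of incarnations would produce the weaker quadratic factor, and it is exactly the transitivity-based observation -- that a single break in the chain witnesses any disagreement -- that delivers the sharp linear factor $k_i-1$. Everything else is a routine combination of the pointwise inequality $Y_i\leq X^C_i$ and the definition of $Y_i$ as a product.
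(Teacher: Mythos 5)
Your proof is correct and follows essentially the same route as the paper: the pointwise bound $Y_i \leq X^C_i$ gives the first claim, and the second follows by identifying the error event with ``the incarnations $X^C_i$ are not all equal'' and applying a union bound over $k_i-1$ pairwise-disagreement events, each of probability at most $\epsilon$. The only immaterial difference is that you chain consecutive pairs $X^{C_j}_i \neq X^{C_{j+1}}_i$, whereas the paper compares a single fixed $X^{C_0}_i$ against each of the others; both spanning structures yield the same $(k_i-1)\epsilon$.
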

\begin{proof}
  To check that $(Y_i)$ defines a non-contextual hidden variable
  model, we observe for every $i\in C\in \Gamma$, $Y_i \leq X^C_i$, hence
  \[
    \sum_{i\in C} Y_i \leq \sum_{i\in C} X^C_i \leq 1.
  \]
  On the other hand, for each $i$,
  \[\begin{split}
    \Pr&\bigl\{ \exists C\ni i\ \text{s.t.}\ X^C_i \neq Y_i \}          \\
       &\phantom{==}
        =    \Pr\bigl\{ (X^C_i\,:\,i\in C\in\Gamma) \ \text{not all equal} \} \\
       &\phantom{==}
        \leq (k_i-1)\epsilon,
  \end{split}\]
  because it is enough to compare a single $X^{C_0}_i$ against
  each other $X^C_i$.
\end{proof}

We record especially the implication of the preceding result for
(linear) non-contextual inequalities.

\begin{proposition}
  \label{prop:ONC-to-NC}
  Consider a non-contextual inequality
  $\sum_i \lambda_i \EE X_i \leq \beta_{\text{cl}}$ for a set $\Gamma$
  of contexts $C \subset V$, and $k_i$ as in Proposition~\ref{prop:epsilon-to-zero}.
  
  Then, for any $\epsilon$-ONC $(X^C_i)$ and any assignment of a context
  $i \longmapsto C_i \ni i$, and letting $t_i := \EE X^{C_i}_i$ we have
  \[
    \sum_i \lambda_i t_i \leq \beta_{\text{cl}} + \epsilon \sum_i \lambda_i(k_i-1).
  \] 
\end{proposition}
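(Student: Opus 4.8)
The plan is to reduce the statement to Proposition~\ref{prop:epsilon-to-zero} by passing through the exact ($0$-ONC) model $(Y_i)$ constructed there. The crucial observation is that, by that proposition, $(Y_i)$ is a genuine non-contextual hidden variable model, hence one of the assignments over which $\beta_{\text{cl}}$ is the maximum; therefore
\[
  \sum_i \lambda_i \EE Y_i \leq \beta_{\text{cl}}.
\]
Everything then comes down to controlling how much replacing $Y_i$ by the chosen representative $X^{C_i}_i$ can increase each term.

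First I would record that $Y_i \leq X^{C_i}_i$ pointwise, since $Y_i$ is the product of all $X^{C}_i$ with $i\in C\in\Gamma$ and each factor is a $\{0,1\}$-variable; in particular $X^{C_i}_i$ is one of the factors. Because both variables take values in $\{0,1\}$, the difference $X^{C_i}_i - Y_i$ is itself a $\{0,1\}$-indicator, so that
\[
  t_i - \EE Y_i
    = \EE\bigl( X^{C_i}_i - Y_i \bigr)
    = \Pr\{ X^{C_i}_i \neq Y_i \}.
\]
The right-hand side is bounded by $\Pr\{\exists C\ni i\ \text{s.t.}\ X^C_i\neq Y_i\}$, which Proposition~\ref{prop:epsilon-to-zero} estimates by $(k_i-1)\epsilon$. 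Thus $t_i \leq \EE Y_i + (k_i-1)\epsilon$ for every $i$, and this bound is independent of which representative context $C_i$ was selected.

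Finally I would multiply by $\lambda_i \geq 0$ and sum, using nonnegativity of the coefficients to preserve the direction of the inequality:
\[
  \sum_i \lambda_i t_i
    \leq \sum_i \lambda_i \EE Y_i + \epsilon \sum_i \lambda_i (k_i-1)
    \leq \beta_{\text{cl}} + \epsilon \sum_i \lambda_i (k_i-1),
\]
which is exactly the claim. There is no substantial obstacle beyond the previous proposition; the only points requiring care are the pointwise domination $Y_i\leq X^{C_i}_i$ (which licenses rewriting the expectation gap as a disagreement probability) and the use of $\lambda_i\geq 0$ in the final summation. It is worth stressing that the estimate of Proposition~\ref{prop:epsilon-to-zero} already accounts for all contexts containing $i$ simultaneously, so the bound is uniform over the choice of assignment $i\mapsto C_i$.
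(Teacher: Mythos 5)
Your proof is correct and follows essentially the same route as the paper's: both pass to the exact model $Y_i := \prod_{i\in C\in\Gamma} X^C_i$, invoke Proposition~\ref{prop:epsilon-to-zero} to get $t_i \leq \EE Y_i + (k_i-1)\epsilon$, and sum with the coefficients $\lambda_i$. You merely fill in details the paper leaves implicit (the pointwise domination $Y_i \leq X^{C_i}_i$ turning the expectation gap into a disagreement probability, and the role of $\lambda_i \geq 0$), which is a faithful elaboration rather than a different argument.
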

\begin{proof}
  Let $Y_i := \prod_{i\in C\in\Gamma} X^{C}_i$ as in eq.~(\ref{eq:epsilon-to-zero}).
  Then, Proposition~\ref{prop:epsilon-to-zero} implies
  \[
    t_i = \EE X^{C_i}_i \leq \EE Y_i + (k_i-1)\epsilon,
  \]
  and summing over $i$ we are done.
\end{proof}

\begin{example}
  \normalfont
  Returning to the pentagon inequality of Klyachko 
  \emph{et al.}~\cite{Klyachko} (Example~\ref{ex:klyachko}),
  we have $k_i=2$ for all outcomes $i=0,1,2,3,4$.
  The original inequality as $\sum_i \EE X_i \leq 2 = \beta_{\text{cl}}$,
  i.e.~all $\lambda_i = 1$. Thus, for any 
  \[
    \epsilon < \frac{\sqrt{5}-2}{5} \approx 0.047,
  \]
  the observation of a value sufficiently close to 
  $\beta_{\text{qu}} = \sqrt{5}$ in a quantum setup with projectors
  rules out $\epsilon$-ONC hidden variables.
\end{example}

\begin{example}
  \normalfont
  Another example is the contextuality proof via the \emph{Mermin-Peres
  square}~\cite{PeresMermin}, which is also the subject of the
  experiment in~\cite{Kirchmair-et-al}. The set of outcomes one has
  to consider here consists of $24$ rank-one projectors in $\CC^4$, 
  forming $24$ contexts of four elements each, each of which corresponds
  to a complete orthonormal basis (cf.~\cite[Fig.~1]{ent+e=0}).
  
  It can be shown that $\sum_i \EE X_i \leq 5 = \beta_{\text{cl}}$,
  whereas the maximum quantum values is $\beta_{\text{qu}}=6$
  (once more see~\cite{ent+e=0}). Since each outcome occurs in exactly
  $k_i=4$ contexts, a value of
  \[
    \epsilon < \frac{6-5}{72} \approx 0.0138
  \]
  suffices to rule out $\epsilon$-ONC hidden variables for this
  experiment.
\end{example}

\medskip
Clearly, in the same way every non-contextual inequality is made robust
to finite precision, by restricting it to $\epsilon$-ONC models
with suitably small $\epsilon > 0$. In particular, this means 
that the MKC hidden variable models cannot be ontologically
faithful, since they are consistent with measuring arbitrary quantum
observable to any finite precision.

As a consequence, MKC-like hidden variable models necessarily have to
have parameters inaccessible in quantum mechanics, since they have
to assign very different classical random variables to arbitrarily
close projectors.

\section{Discussion}
\label{sec:conclusions}
Any proof that classical hidden variables
cannot reproduce the predictions of quantum mechanics is
invariably achieved only under some other assumption, be it
non-contextuality or locality of the classical variables. The
assumption itself is not testable, so it has to be chosen on
other, ``reasonable'', grounds. For instance, in Bell tests, one
would argue that no-signalling is a well-established fact (it
follows from special relativity) and that non-local classical 
variables would have to come with some incredible mechanism
to remain absolutely hidden to prevent some eventual faster-than-light
signal getting out.

For non-contextuality the case is a priori weaker, as we are not
taking recourse to another physical principle (compare however
the ideas formulated in~\cite{Cabello-mem,CHHH} regarding memory
bounds). Instead, we make
assumptions on how to describe experiments. That there are
such things as experiments can hardly be denied, the setup
subjecting a system in some reproducible preparation to measurement,
yielding a result. In addition, quantum mechanics already comes
with the identification between outcomes in experiments and 
effect operators, even up to finite precision and statistical
error bars. However it cannot, by definition, tell us anything 
about the hypothetical hidden-variable theory -- especially if
the no-go theorem involves exhibiting an operational difference
between the two.

Here, we have shown how to include finite precision into the reasoning
about non-contextual hidden variable theories. Our analysis
revolves around the idea that both in quantum,
as well as in possible classical models of any given structure
of contexts on an abstract set of outcomes, this requires 
introducing a metric on the entities of the model reflecting
the degree of approximation. The definition of ontological
faithfulness is one, presumably however not the only, way to 
formalize such a notion.

\medskip
Since the MKC argument has been put forward, and following the 
subsequent debate, other attempts to address 
the finite precision objection have appeared. 
Ignoring the ones that aimed at finding a flaw of some kind in 
MKC (see~\cite{BarrettKent:null} for an extensive review),
we find rather more interesting those introducing some other 
additional property of the hidden variables that should guarantee
experimental testability. The most developed of these is in the
papers of Cabello \emph{et al.}~\cite{Cabello-et-al},
Kirchmair \emph{et al.}~\cite{Kirchmair-et-al} and 
G\"uhne \emph{et al.}~\cite{Guehne-et-al} (cf.~also~\cite{Larsson-et-al}); 
there, the new element
of sequential measurements was introduced and the assumption 
concerns the behaviour of the hidden parameter, and the observable
consequences thereof, under sequences of (almost) compatible 
measurements. Some such assumption is necessary as one can see from
a suitable extension of the MKC models to include a simulation
of the state change due to measurement, which then is able to
reproduce to arbitrary precision the statistics of sequences
of measurements and the effects of the projection postulate.

It may not be possible to make a full comparison between this
proposal and the one presented here, as ontological faithfulness
is a minimally invasive change of the BKS approach in which
any experiment is an indivisible whole. In particular, it does not
imply anything about the history or dynamics of the hidden parameter,
as~\cite{Cabello-et-al,Kirchmair-et-al,Guehne-et-al,Larsson-et-al} 
and related approaches necessarily do.
However, it is worth noting that simply assuming no back-action 
on the hidden parameter, in an $\epsilon$-ONC, we have for $i \in C \cap C'$,
\begin{equation}
  \Pr\{ X_i^{C'} \neq \xi | X_i^C = \xi \} \leq \frac{\epsilon}{\Pr\{X_i^C=\xi\}}.
\end{equation}
I.e., unless the probability of observing an
outcome $i$ in a context $C$ is very small, the subsequent
consultation of the ``same'' outcome in a different context $C'$
yields the same value, with high probability. Thus, ontological
faithfulness of the hidden variable theory implies an
approximate version of repeatability of measurements, even 
on the level of the same outcome in different contexts.

Conversely, it seems likely that the approach of~\cite{Guehne-et-al,Larsson-et-al}
always implies an $\epsilon$-ONC hidden variable model, 
because the assumptions in those papers imply that one can define
hidden variables corresponding to measuring some observable
jointly with other, commuting, ones, and these different 
variables turn out to be $\epsilon$-close in the sense of
Definition~\ref{def:epsilon-ONC}.
For this purpose, we stress that sequential application
of several measurement devices yields just --- another measurement,
as does another ordering of the same devices, or the unitary transfer 
of the state into multiple qubits which are subsequently measured
in space-like separation, or for that matter any magical mystery machine as
long as it is governed by the rules of quantum mechanics. What counts
are the observations that are actually going to be made, and once that 
is decided, these observations can be reflected in a 
suitable compatibility structure of outcomes and contexts. 
The quantum mechanics of these devices and their 
sequential application is taken care of by quantum theory itself; 
not so for the hypothetical hidden variable theory, which has to be 
augmented by additional assumptions to allow for any meaningful
comparison with quantum mechanics.

\medskip
We shall refrain from arguing the plausibility of ontological
faithfulness; that does not seem to make a lot of sense
to the present author, unless one actually believes in 
hidden variables (contextual or non-contextual)
as a viable description of quantum reality.
What after all is the point of no-go theorems? In the best case they 
reveal the incompatibility between a set of preconceptions on the one
hand and a certain description of nature on the other. This
surely is a worthy enterprise, especially since we keep struggling 
with that description in the case of quantum mechanics. In this vein, 
no-go theorems, rather than demonstrating a lack of imagination, really are 
indispensable tools for a fruitful use of it~\cite{Bell82}.

\bigskip
\acknowledgments
Conversations with Jonathan Barrett, Ad\'{a}n Cabello, Giulio
Chiribella, Jonathan Oppenheim, Simone Severini and Rob Spekkens 
on the topic of Bell-Kochen-Specker contextuality proofs are gratefully 
acknowledged. 
Special thanks are due to Sabrina Merli and Enza Di Tomaso for
providing the space-time window in which the core parts of the
present research were carried out.
At that time, the author was affiliated
with the School of Mathematics, University of Bristol, and the
Centre for Quantum Technologies, National University of Singapore.
He was supported by the U.K. EPSRC, the Royal Society,
the Philip Leverhulme Trust,
the European Commission (STREPs ``QCS'' and ``RAQUEL''), 
the ERC (Advanced Grant ``IRQUAT''), 
and the Spanish MINECO (grant number FIS2008-01236) with FEDER funds.

\end{document}